\newtheorem{theorem}{Theorem}
\newtheorem{lemma}{Lemma}
\newtheorem{assumption}{Assumption}
\date{}
\author{Prateek Jaiswal$^1$\and Harsha Honnappa$^1$\and Raghu Pasupathy$^2$\\}
\date{
	$^1$School of Industrial Engineering, Purdue University\\
	$^2$Department of Statistics, Purdue University\\[2ex]
}
\begin{document}

\title{\uppercase{Optimal Allocations for Sample Average Approximation}}

\maketitle

\vspace{1em}
\section*{Abstract}
 We consider a single stage stochastic program without recourse with a strictly convex loss function. We assume a compact decision space and grid it with a finite set of points. In addition, we assume that the decision maker can generate samples of the stochastic variable independently at each grid point and form a \textit{sample average approximation} ({\sc SAA}) of the stochastic program. Our objective in this paper is to characterize an asymptotically optimal linear sample allocation rule, given a fixed sampling budget, which maximizes the decay rate of probability of making false decision. 
 
\section{Introduction}
\label{sec:intro}
Let $\xi$ be a
measurable function that induces the distribution function $F(x) :=
\mathbb P(\xi \leq x)$. Stochastic programs are canonical models for
decision-making under uncertainty, covering a multitude of classic
stochastic optimization problems:
\begin{align}
\label{eq:sp}
&\text{minimize} \quad f(x) := \mathbb E\left[ L(x,\xi)\right] = \int_\Xi
L(x,\xi) dF(\xi),\\
\nonumber 
&\text{subject to} \quad x \in \mathcal X,
\end{align}
where $L(x,\xi) : \mathcal{X}\times \Xi \rightarrow {\rm I\!R}$ is
a continuously differentiable `loss' function, $\mathcal{X} \subset
{\rm I\!R}$ is the set of decision variables; for brevity we also call the sample values
as $\xi$. In this paper, we assume that $L(x,\xi)$ is convex in $x$
and measurable with respect to $\xi$.

% We operate under the assumption that the decision-maker onlyhas access to a computer algorithm for generating samples of the loss function. 
\textit{Sample average approximation} ({\sc SAA}) is a
classic Monte Carlo method for estimating the stochastic
%Let $\{\xi_n,~n\geq 1\}$ be a sequence of i.i.d. samples obtained from the computer algorithm.
program~\eqref{eq:sp}.  Here, the decision maker (DM) grids the decision
space $\mathcal{X}$ into a finite set of points $D := \{x_1, x_2,
..., x_{d}\}$, and simulates samples of the loss function
$L(x_i,\xi)$ at each of the grid points. Assuming a total sampling
budget of $n$, the DM generates $m(x_i)$ independent and identically distributed (i.i.d) samples of $L(x_i,\xi)$, denoted as $L(x_i,\xi^i_j) ~\forall ~ 1\leq j \leq m(x_i)$; at each point $x_i$ with $\sum_{i=1}^{d} m(x_i) = n$. Then, % the empirical measure at each point $x_i$ is simply 
% \begin{align*}
% F_n(x_i) =  \frac{1}{m(x_i)}\sum_{j=1}^{m(x_i)} \mathbb{I}_{\{\xi^i_j \leq x\}},
% \end{align*}
% where $\mathbb{I}_{\{\cdot\}}$ is the indicator function, and
the {\sc SAA} stochastic program is
\begin{align}
\label{eq:saa}
&\min_{x_i \in D} \quad \hat f(x_i) := \frac{1}{m(x_i)}\sum_{j=1}^{m(x_i)} L(x_i,\xi_j^i).
%\\
%\nonumber
%&\text{subject to} \quad x \in \sX.
\end{align}

Modulo regularity conditions on $L(\cdot,\cdot)$, the strong law of
large numbers (SLLN) implies that~\eqref{eq:saa} converges to the
true program in~\eqref{eq:sp} ~\cite{ShDeRu2009}. It is also known that
the optimizers are consistent and the optimal rate of convergence is
$O(n^{-1/2})$. An important allied question to these asymptotic
results is a quantification of the likelihood that the empirical
optimizer diverges from the true optimizer for a given sampling budget
$n$. We seek such a quantification for two reasons:
\begin{itemize}
	\item A quantification of this rate gives a clear sense of how `good'
	the empirical optimizer and empirical optimal value are, and 
	\item in the simulation context it provides a
	guideline on how to allocate a limited
	sampling/computational budget across the design points in $D$.
\end{itemize}

In this paper, we focus on the latter issue. In general, it is a formidable task to compute the likelihood for a
fixed budget; in full generality, one requires tight concentration
bounds in order to make meaningful predictions about budget allocations. Instead, in this paper we establish a large deviations.
principle ({\sc LDP}) satisfied by the Monte Carlo estimator~\eqref{eq:saa}
as the sampling budget tends to infinity.
For the definition of LDP, we refer the readers to section 1.2 of \cite{DeZe2010}.

It is important here to differentiate between the optimally computing budget allocation (OCBA) method for selecting an optimal system from a finite set of systems \cite{Chen2000,GlJu2004} and our approach to {\sc SAA} problem. In OCBA, there is no topology associated with the finite set of systems unlike {\sc SAA}. In addition, OCBA approach only considers  probability of selecting suboptimal system due to random sampling errors, whereas our framework also takes into account the discretization error.  

\subsection{Our Contributions} 
Our main objective is the derivation of an optimal allocation of the
sampling budget across the design points such that a canonical LD rate
is achieved at the optimizer of~\eqref{eq:saa}. In particular, we seek
what we term as `linear' allocation rules where $m(x) = \a_x n$, where
$\a_x \in [0,1]$ and $\sum_{x \in D} \a_x = 1$. 

We make the simplifying assumption that the DM can sample
independently from each design point $x \in D$. In effect, this allows
an `embarrassingly' parallel implementation of the {\sc SAA} estimation,
where `slave' machines compute $\hat f(x)$ with $m(x)$ samples, and communicate the
result to a central `master' machine that coordinates the budget
allocation and aggregates the calculations to compute~\eqref{eq:saa}.  

 Now let $\hat{x}_n := \arg\min_{x_i \in D} \hat{f}(x_i)$
, $\hat{x} := \arg\min_{x_i \in D} {f}(x_i)$
, and ${x^*} :=
\arg\min_{x \in
	\mathcal{X}}
{f}(x)$. As noted before, the Monte Carlo empirical objective converges to the
population objective as $n \to \infty$ for any linear allocation rule, and consequently one expects that $\hat
x_n \to \hat x$ almost surely (a.s.) as $n \to \infty$. Since the objective is assumed to
be continuously differentiable, it follows that $f(\hat x_n) \to
f(\hat x)$ a.s. as $n\to\infty$.  With this information, we establish the
canonical LD rate function satisfied by the optimal value
of~\eqref{eq:saa} as a function of some linear allocation rule and in
the limit of a large sampling budget. This
result follows from the G\"artner-Ellis theorem, and does not involve
any analytical subtleties in light of our assumptions. However, the
rate function has not appeared in the literature before, and might be
of interest more generally. The proof proceeds in two steps. We first
characterize the LD rate of the likelihood of mis-ordering 
the {\sc SAA} empirical estimates at any two points in the decision space
$D$. Next, we use this result to establish our main result on
the LD rate on the likelihood that the objective value $f(\hat x_n)$
at the {\sc SAA} optimizer $\hat x_n$
in~\eqref{eq:saa} is at least $\epsilon > 0$ worse than the true
value $f(x^*)$, in the large budget limit.

We next provide structural results on the LD rate function, in
particular demonstrating that it is strictly concave in the allocation
rule $\mathbf \a := (\a_x,~x\in D)$. Consequently, there exists a
unique optimal linear allocation rule. This result, of course,
presumes that the master machine has complete
information about the statistics of the expected loss function - in
particular, we assume the existence of a cumulant generating
function. In practice, this is not an implementable policy, since the
DM only has access to a Monte Carlo simulator. We next design two
recursive algorithms that optimize the LD rate function as samples
accumulate. The first algorithm parallels Algorithm 2
in~\cite{HuPa2013} and is applicable when a closed form expression for
the rate function for the mis-ordering likelihood is available. When such an
expression is available the problem is really one of ranking and
selection (R\&S). In general {\sc SAA} problems, closed-forms are not easy
to compute and this too must be estimated. Our second algorithm is an
`expectation-maximization' style recursive algorithm. We illustrate
these algorithms with numerical simulation results.

The remainder of the paper is organized as follows. We begin in
Section~\ref{sec:ld} by proving the {\sc LDP} satisfied by the {\sc SAA}
estimator. In Section~\ref{sec:ola} we derive structural properties of
the LD rate function as function of the linear allocation rule, and
exhibit the variational optimization problem to find the optimal
linear allocation rule. We then derive a recursive algorithm for
computing the optimal allocation rule on a sample path (and fixed
sampling budget), and illustrate the algorithm, on three different
example problems. We end with comments on several future directions
for this paper.

\section{Notations and Preliminaries}
We assume there exists a probability sample space $(\Omega, \Xi,
\mathbb P)$, and define $\xi$ with respect to this space.  The
indicator function
of a set is
represented 
by $\mathbb{I}_{\{\cdot\}}$ and $\lfloor \cdot \rfloor$ denotes the greatest integer function. We define a `regret' function
\begin{align}\label{eq:OR}
	\left[ f(\hat{x}_n) - f(x^*) \right] = \left[ f(\hat{x}_n) - f(\hat{x}) \right] + \left[ f(\hat{x}) - f(x^*) \right],
\end{align}
where the first term on the right hand side is the \textit{Sampling
  Error} and the latter term the \textit{Discretization Error}. 
%We also make the following assumptions on the random variable $\xi$.
We make a few assumptions to guarantee the existence
of the LD rate function. We assume that the loss function $L(\cdot,\cdot)$ satisfies

\begin{assumption}
	\label{assume:2}
	$L(x,\xi)$ is not a point mass at $f(x)$ for all $x \in \sX$ and for
	some continuously differentiable function $f$.
\end{assumption}

\begin{assumption}
	\label{assume:3} 
	The cumulant generating function (CGF) of $L(\cdot, \xi)$ is well
	defined and finite for all $x \in \mathcal{X}$, that is
	\begin{align*}
	\Lambda(x,\theta):= \log \mathbb{E} \left[ e^{\theta L(x,\xi)} \right] < \infty \quad \forall \theta \in \mathbb{R}, \quad \forall x \in \mathcal{X} .
	% \item Log MGF: $\Lambda(x,t)=\log M(x,t)$
	\end{align*}
	%The loss function $L(\cdot, \xi)$ is bounded above by a polynomial in $\xi$. 
\end{assumption}

When the loss function $L(\cdot, \xi)$ is bounded above by $\xi$ and the CGF of $\xi$ is well defined and finite, the above assumption is trivially satisfied. 

%  For instance, for the
% quadratic loss function $L(x,\xi)= (x - \xi)^2$, we have
\begin{assumption}
	\label{assume:4}
	Let $H_x(\Lambda):=\{\theta: \Lambda(x,\theta) < \infty\} \forall \
	x \in \mathcal{X}$ be such that the origin belongs to the interior
	of $H_x(\Lambda) $. Furthermore, we also assume that $\Lambda(x,\theta)$ is steep, that is $\lim_{n\to \infty}\left \vert \frac{\partial \Lambda(x,\theta) }{\partial \theta} \right\vert_{\theta= \theta_n} = \infty$, for any sequence $\theta_n$ in the interior of  $H_x(\Lambda)$, which converges to a boundary point of $H_x(\Lambda)\ \forall \ x \in \mathcal{X} $. 
\end{assumption}

\section{Large Deviations for {SAA}} 
\label{sec:ld}

In this section we establish an {\sc LDP} satisfied by the regret
function. Let $x,y \in D$ and consider the Monte Carlo estimates $\hat f(x)$ and
$\hat f(y)$. Our first result establishes a LD rate function for the
likelihood that $\hat f(x)$ and $\hat f(y)$ are mis-ordered in the
large budget limit.
 
\begin{lemma}\label{lem:LDhat}
	Fix $\gamma > 0$ and $x,y \in D$. Then, under
        Assumptions~\ref{assume:2},~\ref{assume:3} and~\ref{assume:4},
        %~\ref{assume:1},
	\begin{align}
		\nonumber
		\lim_{n\to \infty} \frac{1}{n} \log \mathbb{P}(  \hat f(y)- \hat f(x) \geq \gamma )
		&= - I(\gamma,\alpha_x,\alpha_y) \mathbb{I}_{\{ f(y)-f(x) < \gamma \}},
	\end{align}
	where  
	$I(\gamma,\alpha_x,\alpha_y):= \sup_{t \in R} \left( t \gamma
          - \alpha_y \Lambda \left(y,\frac{t}{\alpha_y} \right)  - \alpha_x \Lambda\left(x,-\frac{t}{\alpha_x}\right)  \right) $.
\end{lemma}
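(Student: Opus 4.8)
The plan is to recognize that $\hat f(y) - \hat f(x)$ is a difference of two \emph{independent} sample means, each built from i.i.d.\ replicates, and to invoke the G\"artner--Ellis theorem. First I would identify the correct scaling: since $m(x) = \alpha_x n$ and $m(y) = \alpha_y n$, the natural scaled logarithmic moment generating function to compute is
\begin{align*}
\Lambda_S(t) := \lim_{n\to\infty}\frac{1}{n}\log \mathbb{E}\!\left[e^{\,tn\left(\hat f(y)-\hat f(x)\right)}\right].
\end{align*}
Independence across the two design points factorizes this expectation, and the i.i.d.\ structure at each point turns each factor into a power of a one-dimensional moment generating function. The whole computation is therefore elementary once the scaling is fixed.

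Concretely, writing $tn\,\hat f(y) = (t/\alpha_y)\sum_{j=1}^{m(y)} L(y,\xi_j^y)$ and similarly for $x$, the i.i.d.\ assumption gives
\begin{align*}
\frac{1}{n}\log \mathbb{E}\!\left[e^{\,tn\,\hat f(y)}\right] = \frac{m(y)}{n}\,\Lambda\!\left(y,\frac{t}{\alpha_y}\right) \longrightarrow \alpha_y\,\Lambda\!\left(y,\frac{t}{\alpha_y}\right),
\end{align*}
and the analogous computation for $-tn\,\hat f(x)$ produces $\alpha_x\,\Lambda(x,-t/\alpha_x)$. Hence $\Lambda_S(t) = \alpha_y\Lambda(y,t/\alpha_y) + \alpha_x\Lambda(x,-t/\alpha_x)$, finite for every $t\in\mathbb{R}$ by Assumption~\ref{assume:3}. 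I would then check the G\"artner--Ellis hypotheses: the origin lies in the interior of the effective domain (here all of $\mathbb{R}$ by Assumption~\ref{assume:4}), and essential smoothness/steepness of $\Lambda_S$ follows from Assumption~\ref{assume:4} applied to each summand. The theorem then yields a full LDP for $\hat f(y)-\hat f(x)$ with good rate function $I_S(\cdot)=\Lambda_S^*(\cdot)$, the Legendre--Fenchel transform, which is exactly $I(\gamma,\alpha_x,\alpha_y)$ evaluated at $\gamma$.

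The remaining step, and the one needing the most care, is to turn the LDP for the closed half-line $[\gamma,\infty)$ into the stated limit and to explain the indicator. The G\"artner--Ellis bounds give
\begin{align*}
\lim_{n\to\infty}\frac{1}{n}\log\mathbb{P}\!\left(\hat f(y)-\hat f(x)\geq\gamma\right) = -\inf_{s\geq\gamma} I_S(s),
\end{align*}
so everything reduces to locating the unique zero of the convex rate function $I_S$. A direct differentiation gives $\Lambda_S'(0) = \partial_\theta\Lambda(y,0) - \partial_\theta\Lambda(x,0) = f(y)-f(x)$, so $I_S$ vanishes precisely at the asymptotic mean $f(y)-f(x)$ (nondegeneracy from Assumption~\ref{assume:2}) and is nondecreasing to its right. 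Consequently, when $f(y)-f(x)\geq\gamma$ the infimum is attained at $s=f(y)-f(x)$ and equals $0$, yielding decay rate zero; when $f(y)-f(x)<\gamma$ the constraint $s\geq\gamma$ is active, the infimum equals $I_S(\gamma)=I(\gamma,\alpha_x,\alpha_y)$, and the claimed formula follows. I expect the principal obstacle to be the rigorous verification of steepness/essential smoothness for the composite $\Lambda_S$ (so that the full G\"artner--Ellis theorem applies and the Legendre transform is the true rate), together with clean book-keeping of the floor $\lfloor\alpha_x n\rfloor$ versus $\alpha_x n$, which is harmless in the $\tfrac1n\log$ limit but should be remarked upon.
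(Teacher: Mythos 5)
Your proposal is correct and follows essentially the same route as the paper's proof: scale by $n$, factor the moment generating function using independence across design points and the i.i.d.\ structure within each, pass to the limit $\varphi(t,\alpha_x,\alpha_y)=\alpha_y\Lambda\left(y,\frac{t}{\alpha_y}\right)+\alpha_x\Lambda\left(x,-\frac{t}{\alpha_x}\right)$, invoke G\"artner--Ellis under Assumptions~\ref{assume:3} and~\ref{assume:4}, and evaluate $\inf_{z\geq\gamma}$ of the convex rate function whose unique zero sits at $f(y)-f(x)$, which produces the indicator. Your explicit computation of $\Lambda_S'(0)=f(y)-f(x)$ and the remark on the floor-function book-keeping are slightly more careful than the paper's terse treatment, but they are refinements of the same argument, not a different one.
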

\begin{proof}
	
	Let $Y_n := n \left( \hat{f}(y)- \hat{f}(x) \right)$, and observe that for any $t \in \mathbb{R}$
	\begin{align*}
		\mathbb E[e^{tY_n}] &= \mathbb E \left[ \exp\left({t\sum_{i=1}^{\lfloor \alpha_y n \rfloor} \frac{L(y,\xi_i^y)}{\alpha_y} - t\sum_{j=1}^{\lfloor \alpha_x n \rfloor} \frac{L(x,\xi_j^x)}{\alpha_x}}\right) \right] 
		\\
		%&= \mathbb E \left[ \exp\left({t\sum_{i=1}^{\lfloor \alpha_y n \rfloor} \frac{L(y,\xi_i^y)}{\alpha_y} }\right) \right] \mathbb E \left[ \exp\left({- t\sum_{j=1}^{\lfloor \alpha_x n \rfloor} \frac{L(x,\xi_j^x)}{\alpha_x}}\right) \right] 
		%\\
		&= \mathbb E \left[ \exp\left({t \frac{L(y,\xi)}{\alpha_y} }\right) \right]^{\lfloor \alpha_y n \rfloor} \mathbb E \left[ \exp\left({- t\frac{L(x,\xi)}{\alpha_x}}\right) \right]^{\lfloor \alpha_x n \rfloor},
	\end{align*}
	where the last equality follows from the fact that we
        sample independently at every design point. Next, using the fact that $\lim_{n \to \infty} \lfloor \alpha_i n \rfloor / n = \alpha_i $ % define $\varphi_n(t,\alpha_x,\alpha_y) := \frac{1}{n} \log
% \mathbb E[e^{tY_n}]  = \alpha_y \log  \mathbb E \left[ \exp\left({t
%     \frac{L(y,\xi)}{\alpha_y} }\right) \right] + \alpha_x \log \mathbb E
% \left[ \exp\left({- t\frac{L(x,\xi)}{\alpha_x}} \right) \right]$, and
observe that 
\[
\lim_{n \to \infty} \frac{1}{n} \log\mathbb E\left[e^{tY_n} \right] =  \alpha_y \log  \mathbb E \left[ \exp\left({t
    \frac{L(y,\xi)}{\alpha_y} }\right) \right] + \alpha_x \log \mathbb E
\left[ \exp\left({- t\frac{L(x,\xi)}{\alpha_x}} \right) \right] =: 
\varphi(t,\alpha_x,\alpha_y) \quad \forall t \in
\mathbb{R}.
\]
By Assumption ~\ref{assume:3} , $\varphi(t,\alpha_x,\alpha_y) <
\infty \quad \forall t \in  \mathbb{R} $. Together with
Assumption~\ref{assume:4} it follows that the G\"artner--Ellis Theorem holds ~\cite{DeZe2010}, and the lemma is proved with
good rate function $I(z,\alpha_x,\alpha_y):= \sup_{t \in R} \left( tz
  - \varphi(t,\alpha_x,\alpha_y)\right) $. Since $I(z,\alpha_x,\alpha_y)$ is strictly
convex in $z$ and attains the minimum value 0 precisely at $f(y)-f(x)$, 	
%	\[   
%	\inf_{z \in [\gamma,\infty)} I(z,\alpha_x,\alpha_y)= 
%	\begin{cases}
%	I(\gamma,\alpha_x,\alpha_y) &\quad \forall ~x,y~\in D: f(y)-f(x) < \gamma \\
%	0 &\quad\forall~x,y~\in D: f(y)-f(x) \geq \gamma. \\
%	\end{cases}
%	\]
Therefore, $\inf_{z \in [\gamma,\infty)} I(z,\alpha_x,\alpha_y)= I(\gamma,\alpha_x,\alpha_y) \mathbb{I}_{\{ f(y)-f(x) < \gamma \}}. $	
\end{proof}

The next two lemmas
are crucial for establishing the
main result of this section.

\begin{lemma}\label{lem:limmax}
	Fix $k \in
        \mathbb{N}$
        and let
        $\{a^n_i\}
        \subset \bbR$
        be 
        arbitrary
        sequences
        for $~1\leq
        i \leq k$. Then
	\[
        \liminf_{n
          \to
          \infty}
        \max\{a_1^n,a_2^n,
        \ldots,
        a_k^n \}
        \geq \max\{
        \liminf_{n
          \to
          \infty}
        a_1^n,
        \liminf_{n
          \to
          \infty}
        a_2^n,
        \ldots,
        \liminf_{n
          \to
          \infty}a_k^n\}. 
        \]
	\end{lemma}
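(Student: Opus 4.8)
The plan is to exploit the elementary monotonicity of the $\liminf$ operation together with the fact that a maximum dominates each of its arguments termwise. First I would fix an arbitrary index $i$ with $1 \leq i \leq k$ and observe that, for every $n$, the pointwise inequality $\max\{a_1^n, \ldots, a_k^n\} \geq a_i^n$ holds trivially by the definition of the maximum.

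Next I would invoke the standard fact that $\liminf$ preserves inequalities: if $b^n \geq c^n$ for all $n$, then $\liminf_{n \to \infty} b^n \geq \liminf_{n \to \infty} c^n$. Applying this with $b^n = \max\{a_1^n, \ldots, a_k^n\}$ and $c^n = a_i^n$ yields $\liminf_{n \to \infty} \max\{a_1^n, \ldots, a_k^n\} \geq \liminf_{n \to \infty} a_i^n$ for each fixed $i$.

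Finally, since the left-hand side is a single (extended-real) number that bounds $\liminf_{n \to \infty} a_i^n$ from above for every $i = 1, \ldots, k$, it must also bound their maximum; here the finiteness of $k$ guarantees that the maximum on the right is attained at some index. Taking the maximum over $i$ on the right-hand side then completes the argument.

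I expect no genuine obstacle here: the statement is a routine real-analysis fact, and the only point requiring minor care is that the individual $\liminf$ values may lie in the extended reals $[-\infty, +\infty]$. Accordingly, the monotonicity step and the final maximization should be read in that totally ordered setting, where both assertions remain valid.
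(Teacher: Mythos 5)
Your proposal is correct and follows exactly the same route as the paper's own proof: bound the maximum below by each $a_i^n$ termwise, apply monotonicity of $\liminf$, and then take the maximum over the finitely many indices $i$. The only difference is that you explicitly flag the extended-real-valued case, which the paper leaves implicit; this is a harmless (and slightly more careful) addition.
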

\begin{proof}
	First observe that for any $i \in \{1,2,\ldots,k\}$
\(
		\max\{a_1^n,a_2^n, \ldots, a_k^n \} \geq a^n_i.	
\)
	~Therefore, 
\(
	\liminf_{n \to \infty} \max\{a_1^n,a_2^n, \\ \ldots, a_k^n \} \geq  \liminf_{n \to \infty} a^n_i.	
\)
	~Since this holds for any $i \in \{1,2,\ldots,k\}$, the lemma follows. 
	\end{proof}

        \begin{lemma}\label{lem:limsum}
          Let $\{a_i^n\} \subset \bbR$ for $i=1,2$ be arbitrary
          sequences. Then,
          \[
          \liminf_{n \to
            \infty}
          (a_1^n +
          a_2^n)
          \geq
\liminf_{n
            \to
            \infty}
          a_1^n  +
          \liminf_{n
            \to
            \infty}
          a_2^n. 
          \]
	\end{lemma}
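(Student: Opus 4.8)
The plan is to prove the superadditivity of the $\liminf$ directly from its characterization as the limit of tail infima. Recall that for any real sequence $\{a^n\}$ one has $\liminf_{n\to\infty} a^n = \lim_{N\to\infty} \inf_{n\geq N} a^n$, and that the map $N \mapsto \inf_{n \geq N} a^n$ is nondecreasing, so this limit always exists in the extended reals $[-\infty,+\infty]$. This reformulation is what makes the argument a two-line monotone limit rather than a subsequence chase.

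First I would fix $N \in \mathbb{N}$ and observe that for every $n \geq N$ we have $a_1^n \geq \inf_{m \geq N} a_1^m$ and $a_2^n \geq \inf_{m \geq N} a_2^m$; adding these gives $a_1^n + a_2^n \geq \inf_{m \geq N} a_1^m + \inf_{m \geq N} a_2^m$ for all $n \geq N$. Since the right-hand side is a lower bound that does not depend on $n$, taking the infimum over $n \geq N$ on the left yields
\[
\inf_{n \geq N}\bigl(a_1^n + a_2^n\bigr) \;\geq\; \inf_{m \geq N} a_1^m + \inf_{m \geq N} a_2^m .
\]

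Next I would let $N \to \infty$. Each of the three terms is monotone in $N$ and hence converges to the corresponding $\liminf$, and the inequality is preserved under the limit, delivering
\[
\liminf_{n \to \infty}\bigl(a_1^n + a_2^n\bigr) \;\geq\; \liminf_{n\to\infty} a_1^n + \liminf_{n\to\infty} a_2^n,
\]
which is exactly the claim.

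The only genuine subtlety — and the step I would be most careful about — is the arithmetic in the extended reals when passing to the limit, since the right-hand side could a priori be an indeterminate form $(+\infty)+(-\infty)$. In the intended application both liminfs arise as rate-function lower bounds and are finite (or at worst simultaneously $+\infty$), so no indeterminacy arises; for the fully general statement one simply notes that whenever the right-hand side is well defined in $[-\infty,+\infty]$ the monotone passage above goes through verbatim, and when it is not the inequality is vacuous. I would therefore dispatch this with a one-line convention rather than belabor the case analysis.
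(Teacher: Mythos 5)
Your proof is correct and is essentially the argument the paper itself gestures at: its one-line proof invokes ``the definition of $\liminf$ and infimum inequality,'' which is exactly the tail-infimum bound $\inf_{n \geq N}(a_1^n + a_2^n) \geq \inf_{m \geq N} a_1^m + \inf_{m \geq N} a_2^m$ followed by the monotone passage $N \to \infty$ that you spell out. Your added care about the extended-real indeterminate form is a reasonable refinement the paper leaves implicit, and is harmless in the intended application where the liminfs are rate-function bounds.
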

\begin{proof}
	The proof follows from the definition of $\liminf$ and infimum inequality.
	
	\end{proof}

 We now turn to main result, which establishes an {\sc LDP} for the regret~\eqref{eq:OR}. 
 
\begin{theorem} \label{thrm:1}
	Fix $\e > f(\hat{x}) - f(x^*)$. Under
        Assumptions \ref{assume:2}, \ref{assume:3} and
        \ref{assume:4} the regret~\eqref{eq:OR}~satisfies
        %~\ref{assume:1},
        \begin{align}
        \nonumber
        \lim_{n\to \infty} \frac{1}{n} \log \mathbb{P}(  f(\hat{x}_n) - f(x^*) \geq \e )
        &= - J(\e),
        \end{align}
        %~satisfies an {\sc LDP} under the stabilizing rate sequence $a_n = n$ with rate function
%\(
	% \lim_{n\to \infty} \frac{1}{n} \log \mathbb{P}(  f(\hat{x}_n)- f(x^*) > \epsilon )
	% &=
          
%\)
	where  
	$J(\e) := \min_{x \in Q(\d)}  \sum _{y \in D} I(\alpha_x,\alpha_y) \mathbb{I}_{\{ f(y) < f(x) \}}$,
	$I(\alpha_x,\alpha_y):= \sup_{t \in R} \left( - \alpha_y \Lambda(y,\frac{t}{\alpha_y})  - \alpha_x \Lambda(x,-\frac{t}{\alpha_x})  \right) $, and $Q(\d):= \{x \in D: f(x)  > f(x^*) + \epsilon \}$.

\end{theorem}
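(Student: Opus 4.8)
The plan is to first collapse the regret event onto the question of which grid point the {\sc SAA} procedure selects, and then to run a large-deviations analysis of that selection. By the decomposition~\eqref{eq:OR}, the discretization error $f(\hat x)-f(x^*)$ is a deterministic constant that is strictly smaller than $\epsilon$ by the standing hypothesis $\epsilon>f(\hat x)-f(x^*)$. Hence, on the finite grid $D$ (and for all but finitely many $\epsilon$, avoiding the discrete boundary values), the event $\{f(\hat x_n)-f(x^*)\geq\epsilon\}$ coincides with $\{f(\hat x_n)\geq f(x^*)+\epsilon\}=\{\hat x_n\in Q(\delta)\}$. So the regret exceeds $\epsilon$ exactly when the sampling error drags the empirical optimizer onto some genuinely inferior point $x\in Q(\delta)$, and I would analyze $\mathbb P(\hat x_n\in Q(\delta))$ from here on.

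I would then split on the selected point, $\{\hat x_n\in Q(\delta)\}=\bigcup_{x\in Q(\delta)}\{\hat x_n=x\}$, and note that $\{\hat x_n=x\}=\bigcap_{y\in D}\{\hat f(y)\geq\hat f(x)\}$. The outer union is handled on the exponential scale by the union bound (for the upper bound) and by $\mathbb P(\bigcup)\geq\max_{x}\mathbb P(\hat x_n=x)$ together with Lemma~\ref{lem:limmax} (for the lower bound); since $|Q(\delta)|<\infty$ the prefactor $\tfrac1n\log|Q(\delta)|$ vanishes, and both directions convert the union into the outer $\min_{x\in Q(\delta)}$ of $J(\epsilon)$. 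The task thus reduces to showing, for each fixed $x$, that $\tfrac1n\log\mathbb P\big(\bigcap_{y\in D}\{\hat f(y)\geq\hat f(x)\}\big)\to-\sum_{y\in D}I(\alpha_x,\alpha_y)\mathbb I_{\{f(y)<f(x)\}}$.

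The single-pair input is Lemma~\ref{lem:LDhat} evaluated at $\gamma=0$, giving $\tfrac1n\log\mathbb P(\hat f(y)\geq\hat f(x))\to-I(\alpha_x,\alpha_y)\mathbb I_{\{f(y)<f(x)\}}$, with the points satisfying $f(y)>f(x)$ contributing a typical, rate-zero event. To lower bound the joint exponent, I would exploit the independent-sampling structure: each event $A_y:=\{\hat f(y)\geq\hat f(x)\}$ is a coordinatewise monotone function of the independent sample blocks (nonincreasing in the samples at $x$, nondecreasing in the samples at $y$), so the Harris/FKG inequality for product measures yields $\mathbb P(\bigcap_y A_y)\geq\prod_y\mathbb P(A_y)$. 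Taking $\tfrac1n\log$ and applying Lemma~\ref{lem:limsum} (iterated over $y$) then gives $\liminf_n\tfrac1n\log\mathbb P(\bigcap_y A_y)\geq-\sum_{y}I(\alpha_x,\alpha_y)\mathbb I_{\{f(y)<f(x)\}}$, which together with the union step produces the $\geq-J(\epsilon)$ half of the statement.

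The main obstacle is the matching upper bound. The events $A_y$ all share the common estimator $\hat f(x)$, so they are positively associated rather than independent, and the crude bound $\mathbb P(\bigcap_y A_y)\leq\mathbb P(A_{y_0})$ recovers only $\max_y I(\alpha_x,\alpha_y)$, far short of the sum. Extracting the full additive rate for the upper bound therefore demands that the dependence through $\hat f(x)$ be controlled directly; the natural route is to condition on (equivalently, to sup over) the deviated level $a$ of $\hat f(x)$ and to apply Cram\'er's theorem on each independent block, which produces an exponent of the joint form $\min_{a}\big(\alpha_x\Lambda^*(x,a)+\sum_{y:f(y)<a}\alpha_y\Lambda^*(y,a)\big)$, where $\Lambda^*(x,\cdot)$ is the Fenchel--Legendre transform of $\Lambda(x,\cdot)$. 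Reconciling this shared-level optimum with the claimed sum $\sum_{y}I(\alpha_x,\alpha_y)\mathbb I_{\{f(y)<f(x)\}}$ is precisely the step I would examine most carefully, since the two coincide when a single competitor dominates but differ in general; pinning down the exact upper-bound exponent is where I expect the real work to lie.
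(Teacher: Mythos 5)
Your reduction to $\{\hat x_n\in Q(\delta)\}$, the union--intersection decomposition, and the handling of the outer union by the union bound and Lemma~\ref{lem:limmax} are exactly the paper's steps. Your lower bound is, in fact, more rigorous than the paper's: in~\eqref{eq:eq3a} the paper passes from $\log\mathbb{P}\left(\bigcap_{y\in D}\{\hat f(x)\le\hat f(y)\}\right)$ to $\sum_{y\in D}\log\mathbb{P}\left(\hat f(x)\le\hat f(y)\right)$ citing Lemmas~\ref{lem:limmax},~\ref{lem:limsum} and ``sampling independence,'' but the events indexed by $y$ all contain the common estimator $\hat f(x)$ and are therefore \emph{not} independent; your Harris/FKG argument (all events become increasing after flipping the sign of the $x$-block) is what actually justifies $\mathbb{P}\left(\bigcap_y A_y\right)\ge\prod_y\mathbb{P}(A_y)$ and hence the $\ge -J(\epsilon)$ half.

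The obstacle you flag in the upper bound is not a gap in your argument; it is a genuine error in the paper. Inequality~\eqref{eq:eq2a} asserts $\limsup_n\frac1n\log\mathbb{P}\left(\bigcap_y A_y\right)\le\sum_y\limsup_n\frac1n\log\mathbb{P}(A_y)$, again ``by independence,'' but positive association makes this inequality run the wrong way, and the exponents genuinely differ. Your shared-level expression is the correct joint rate: conditioning on $\hat f(x)\approx a$, the blocks at the competitors are independent, so the exponent is $\inf_a\big(\alpha_x\Lambda^*(x,a)+\sum_{y:f(y)<a}\alpha_y\Lambda^*(y,a)\big)$, in which the cost of deviating $\hat f(x)$ is paid \emph{once}, whereas $\sum_y I(\alpha_x,\alpha_y)\mathbb{I}_{\{f(y)<f(x)\}}$ pays it once per competitor and lets each pair choose its own meeting point. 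Concretely, take Gaussian losses with unit variances, $d=3$, $\alpha_x=\alpha_{y_1}=\alpha_{y_2}=\tfrac13$, $f(x)=1$, $f(y_1)=f(y_2)=0$, with $y_1,y_2\notin Q(\delta)$ and $x$ the dominant point. By~\eqref{eq:gauss-loss} the theorem's sum is $2\cdot\tfrac1{12}=\tfrac16$, while the true exponent is
\begin{align*}
\inf_{a\in(0,1)}\ \tfrac16\left((a-1)^2+2a^2\right)=\tfrac19<\tfrac16,
\end{align*}
attained at $a=\tfrac13$. So $\mathbb{P}\left(\bigcap_y A_y\right)$ is exponentially larger than the product of the pairwise probabilities, and the stated limit $-J(\epsilon)$ is incorrect whenever the dominant point in $Q(\delta)$ has more than one strictly better competitor. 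The correct statement of Theorem~\ref{thrm:1} must replace $\sum_{y}I(\alpha_x,\alpha_y)\mathbb{I}_{\{f(y)<f(x)\}}$ by the shared-level rate you wrote down (provable by G\"artner--Ellis for the vector of block means plus the contraction principle); your proof cannot be completed for the theorem as stated, and neither is the paper's.
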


\begin{proof}
Recall that our objective is to characterize the rate of decay of the likelihood of the rare event,
	\begin{align}
          \{
           f(\hat{x}_n)- f(x^*) > \epsilon
          \}.
	\end{align}	
	Observe that
	\begin{align*}
	\mathbb{P}(  f(\hat{x}_n)- f(x^*) > \epsilon ) &= \mathbb{P} \left(  f(\hat{x}_n) - f(\hat{x}) + f(\hat{x}) - f(x^*) > \epsilon \right)\\
	%&= \mathbb{P} \left(  f(\hat{x}_n) - f(\hat{x})  > \epsilon - (f(\hat{x}) - f(x^*)) \right)\\
	&= \mathbb{P} \left(  f(\hat{x}_n) - f(\hat{x})  > \delta) \right),  	
	\end{align*}
	where $\delta := \epsilon - (f(\hat{x}) - f(x^*))$, and
        $f(\hat{x}) - f(x^*)$ is the non-random discretization
        error. From the definition of $Q(\d)$ set
        % , define the set
%	\begin{align*}
%	Q(\d):= \{x \in D: f(x)  > f(\hat{x}) + \delta  \},
%	\end{align*} 
%	and 
		observe that the event $\left\{  f(\hat{x}_n) - f(\hat{x})
          > \delta) \right\}$ is equivalent to $\left\{  \hat{x}_n \in
          Q(\d) \right\}$. Now, using the definition of $\hat{x}_n$,
        observe the equivalence
	\begin{align*}
	\left\{  \hat{x}_n \in Q(\d) \right\} = \bigcup_{x \in Q(\d)} \bigcap_{y \in D} \left\{ \hat{f}(x) \leq \hat{f}(y) \right\}.
	\end{align*} 
	Therefore, it straightforwardly follows that
	\begin{align}
	\max _{x \in Q(\d)} \mathbb{P} \left( \bigcap_{y \in D} \left\{ \hat{f}(x) \leq \hat{f}(y) \right\}\right) \leq \mathbb{P}(  f(\hat{x}_n)- f(x^*) > \epsilon ) \leq \sum _{x \in Q(\d)} \mathbb{P} \left( \bigcap_{y \in D} \left\{ \hat{f}(x) \leq \hat{f}(y) \right\}\right).
	\label{eq:eq1}
	\end{align}
        Now, since the set $Q(\d)$ is finite, Lemma 1.2.15 of  ~\cite{DeZe2010}
        implies
	\begin{align}
	\nonumber
	\limsup_{n\to \infty} \frac{1}{n} \log \mathbb{P}(  f(\hat{x}_n)- f(x^*) > \epsilon ) &\leq \limsup_{n\to \infty} \frac{1}{n} \log \sum _{x \in Q(\d)} \mathbb{P} \left( \bigcap_{y \in D} \left\{ \hat{f}(x) \leq \hat{f}(y) \right\}\right)
	\\
	\nonumber
	&= \max_{x \in Q(\d)} \limsup_{n\to \infty} \frac{1}{n} \log \mathbb{P} \left( \bigcap_{y \in D} \left\{ \hat{f}(x) \leq \hat{f}(y) \right\}\right)
	%\\
	%&= \max_{x \in Q(\d)} \limsup_{n\to \infty} \frac{1}{n} \sum _{y \in D} \log \mathbb{P} \left( \hat{f}(x) \leq \hat{f}(y)\right)
	%\label{eq:eq2}
	\\
	&\leq \max_{x \in Q(\d)}  \sum _{y \in D} \limsup_{n\to \infty} \frac{1}{n} \log \mathbb{P} \left( \hat{f}(x) \leq \hat{f}(y)\right),
	\label{eq:eq2a}
	\end{align} 
where the last inequality follows from the fact that sampling is
independent at each of the design points. Next, for the lower bound, the monotonicity of the logarithm function and the sampling independence implies
\begin{align}
	\nonumber
	\liminf_{n\to \infty} \frac{1}{n} \log \mathbb{P}(  f(\hat{x}_n)- f(x^*) > \epsilon ) &\geq \liminf_{n\to \infty} \frac{1}{n} \log \max _{x \in Q(\d)} \mathbb{P} \left( \bigcap_{y \in D} \left\{ \hat{f}(x) \leq \hat{f}(y) \right\}\right)
	\\
%	\nonumber
%	&= \liminf_{n\to \infty} \max _{x \in Q(\d)} \frac{1}{n}  \sum_{y \in D}\log  \mathbb{P} \left(  \left\{ \hat{f}(x) \leq \hat{f}(y) \right\}\right)
%	\\
	&\geq  \max _{x \in Q(\d)}   \sum_{y \in D} \liminf_{n\to \infty} \frac{1}{n}  \log  \mathbb{P} \left(  \hat{f}(x) \leq \hat{f}(y)\right),
	\label{eq:eq3a}
	\end{align} 
	where the final inequality follows from
        Lemmas~\ref{lem:limmax} and~\ref{lem:limsum}. Finally, the
        theorem follows from an application of Lemma~\ref{lem:LDhat} to~\eqref{eq:eq2a} and~\eqref{eq:eq3a}.	
\end{proof}

Some comments are in order for this result. First, observe that the
fact that we assume a finite grid $D \subset \sX$ implies that the LD
rate function is well-defined. Furthermore, it can be anticipated that
the rate function can be established even with a countable
grid. Second, the set $Q(\d)$ is critical for establishing the limit. This is the
set of design points in the grid $D$ that are $\e$ worse than the
\textit{optimal} design point $x^* \in \sX$. Thus, the rate function
$J(\e)$ identifies the dominant point on the boundary of the set
$Q(\d)$ that $\hat x_n$ is most likely to diverge away from
$x^*$. The form of $J(\e)$ indicates that it is a composition of the
closest point in $Q(\d)$ to each design point in $D$ that is most
likely to cause a mis-ordering.
\section{Optimal Linear Allocation Rule} \label{sec:ola}

In this section, we identify an (asymptotically) optimal linear
allocation rule that optimizes the rate function in Theorem
~\ref{thrm:1}. Note that this optimization is \textit{post hoc} in the
sense that the rate function is identified for an arbitrary
allocation. The optimal allocation maximizes the rate at which the log
likelihood of misordering the empirical optimizer
approaches zero. First, we show that the rate function obtained in Theorem ~\ref{thrm:1} is {strictly concave} and thus has a unique maximizer. Let $J(\alpha_x,\alpha_y,t):= \left( \alpha_y \Lambda\left(y,\frac{t}{\alpha_y}\right) + \alpha_x \Lambda\left(x,-\frac{t}{\alpha_x}\right)  \right)$ for brevity.

\begin{lemma} \label{lem:concrate1}
	$I(\alpha_x,\alpha_y)= \sup_{t \in R} \left( - \alpha_y
          \Lambda\left(y,\frac{t}{\alpha_y}\right) - \alpha_x
          \Lambda\left(x,-\frac{t}{\alpha_x}\right)  \right)$, is strictly concave $\forall \{\alpha_x,\alpha_y\} \in[0,1]\times [0,1]$.
\end{lemma}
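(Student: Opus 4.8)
The plan is to prove the equivalent statement that $-I(\alpha_x,\alpha_y)=\inf_{t\in\mathbb{R}} J(\alpha_x,\alpha_y,t)$ is convex, and strictly so in the relevant sense. The starting point is that for each fixed design point the cumulant generating function $\theta\mapsto\Lambda(\cdot,\theta)$ is convex (a standard consequence of H\"older's inequality) and, because Assumption~\ref{assume:2} rules out $L(\cdot,\xi)$ being a point mass, it is in fact strictly convex with $\partial^2\Lambda/\partial\theta^2>0$ on the interior of its domain. The key structural observation is that each summand of $J$ is a perspective transform of such a CGF: $\alpha_y\Lambda(y,t/\alpha_y)$ is the perspective of $\theta\mapsto\Lambda(y,\theta)$ in the pair $(\alpha_y,t)$, while $\alpha_x\Lambda(x,-t/\alpha_x)$ is the perspective of $\theta\mapsto\Lambda(x,\theta)$ in $(\alpha_x,-t)$, a linear image of $(\alpha_x,t)$. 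Since the perspective of a convex function is jointly convex, $J$ is jointly convex in $(\alpha_x,\alpha_y,t)$.

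Next I would invoke the fact that partial minimization of a jointly convex function over one block of variables yields a convex function of the remaining variables; applied to $\inf_t J$ this shows $-I$ is convex, hence $I$ is concave. To upgrade to strictness I would compute the Hessian of $m(\alpha):=\inf_t J$ directly. Because $\Lambda$ is finite and steep (Assumptions~\ref{assume:3}--\ref{assume:4}), the inner problem has a unique smooth minimizer $t^*(\alpha)$ characterized by $\partial_t J=0$, so by the envelope theorem and the Schur-complement formula $\nabla^2 m = H_{\alpha\alpha}-H_{\alpha t}H_{tt}^{-1}H_{t\alpha}$ evaluated at $t^*$. Each perspective contributes a rank-one positive-semidefinite Hessian, and carrying out the Schur complement collapses the $2\times2$ block to a single rank-one positive-semidefinite matrix whose unique degenerate direction is radial.

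This last point is exactly where the main obstacle lies. The function $I$ is positively homogeneous of degree one (replacing $(\alpha_x,\alpha_y)$ by $(\lambda\alpha_x,\lambda\alpha_y)$ and rescaling $t$ gives $I(\lambda\alpha_x,\lambda\alpha_y)=\lambda I(\alpha_x,\alpha_y)$), so $I$ is linear along every ray through the origin and therefore cannot be strictly concave on all of $[0,1]^2$ in the literal sense; Euler's relation forces the radial direction $(\alpha_x,\alpha_y)$ into the null space of $\nabla^2 m$. The substance of the argument is thus to show that this null space is \emph{exactly} one-dimensional, equivalently that $\nabla^2 m$ is positive definite on the complement of the radial line. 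This is where strict convexity of the two CGFs (Assumption~\ref{assume:2}) is essential: it guarantees $\partial^2\Lambda/\partial\theta^2>0$, so the two rank-one perspective Hessians have distinct, linearly independent degenerate directions and their Schur complement does not degenerate further. Restricted to the allocation simplex $\{\sum_{x}\alpha_x=1\}$, which is transverse to every ray, this rank-one negativity becomes genuine strict concavity, which is precisely what is needed to conclude a unique optimal linear allocation rule.
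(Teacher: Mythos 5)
Your proposal is correct, and it takes a genuinely different route that exposes a real error in the lemma as the paper states and proves it. Your homogeneity observation is verifiably right: substituting $t=\lambda s$ inside the supremum gives $I(\lambda\alpha_x,\lambda\alpha_y)=\lambda I(\alpha_x,\alpha_y)$, so $I$ is linear along every ray through the origin and the concavity inequality holds with \emph{equality} for any two points on a common ray; the paper's own Gaussian example, $I(\alpha_x,\alpha_y)=\tfrac{1}{2}(f(x)-f(y))^2\left(\sigma^2(y)/\alpha_y+\sigma^2(x)/\alpha_x\right)^{-1}$, is homogeneous of degree one and visibly not strictly concave on $[0,1]^2$. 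The paper instead computes $\tfrac{\partial^2}{\partial\alpha_y^2}\,\alpha_y\Lambda\left(y,t/\alpha_y\right)$ as a twisted variance times $t^2/\alpha_y^3$ (strictly positive only when $t\neq 0$, a case it overlooks), asserts blockwise strict convexity of $J$ in $\alpha$ and in $t$ separately, and then concludes via an approximate-minimizer argument whose pivotal step is the strict Jensen inequality $J(\beta\alpha^1+(1-\beta)\alpha^2,\beta t_1+(1-\beta)t_2)<\beta J(\alpha^1,t_1)+(1-\beta)J(\alpha^2,t_2)$. That step requires strict \emph{joint} convexity of $J$ in $(\alpha_x,\alpha_y,t)$, which blockwise strictness does not supply and which is in fact false: as you observe, $J$ is a sum of perspective functions and is therefore linear along rays $(\lambda\alpha_x,\lambda\alpha_y,\lambda t)$ --- precisely the directions along which the paper's inequality degenerates to equality. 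Your repaired statement (concavity everywhere via partial minimization of a jointly convex function; strictness only transverse to rays, hence on the simplex $\sum_x\alpha_x=1$) is the correct one, and it is all that Lemma~\ref{lem:concrate2} and the subsequent theorem need for uniqueness of the optimal allocation, since the constraint set meets each ray at most once.

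Two points to tighten in your write-up. First, strictness on the simplex additionally requires $f(x)\neq f(y)$: the inner minimizer satisfies $\partial_t J=0$ with $\partial_t J\vert_{t=0}=f(y)-f(x)$, so if $f(x)=f(y)$ then $t^*=0$, both perspective Hessians annihilate the entire $\alpha$-plane, and indeed $I\equiv 0$, which is not strictly concave in any direction. This is harmless downstream because the rate function in Theorem~\ref{thrm:1} only involves pairs weighted by $\mathbb{I}_{\{f(y)<f(x)\}}$, but the hypothesis should appear in the corrected lemma. Second, your claim that the Schur complement ``does not degenerate further'' deserves one explicit line: since $H_{tt}>0$ and $w^{\top}\left(H_{\alpha\alpha}-H_{\alpha t}H_{tt}^{-1}H_{t\alpha}\right)w=\min_{v}\,(w,v)^{\top}H\,(w,v)$, the null space of the Schur complement is exactly the projection onto the $\alpha$-coordinates of the null space of the full Hessian $H$; for $t^*\neq 0$ the null spaces of the two rank-one perspective blocks intersect only in the line spanned by $(\alpha_x,\alpha_y,t^*)$, whose projection is the radial line, which gives the exact one-dimensional degeneracy you assert.
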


\begin{proof}
	Observe that proving $I(\alpha_x,\alpha_y)$ is concave in
        $\alpha_x$ and $\alpha_y$ is equivalent to proving  \[
        \inf_{t \in R} \left( \alpha_y
          \Lambda\left(y,\frac{t}{\alpha_y}\right) + \alpha_x
          \Lambda\left(x,-\frac{t}{\alpha_x}\right)  \right)  =
        \inf_{t \in R} J(\alpha_x,\alpha_y,t) \] is convex $\forall
        ~\alpha_x,\alpha_y~ \in[0,1]\times [0,1]$. First, we
        demonstrate that $\alpha_y
        \Lambda\left(y,\frac{t}{\alpha_y}\right)$ is convex in $t$ and
        $\alpha_y$. Using the definition of $\Lambda(\cdot,t)$ we have
%	\begin{align*}
%		\frac{\partial  }{\partial \alpha_y} \alpha_y \Lambda(y,\frac{t}{\alpha_y}) &= \log \mathbb E \left[ e^{t \frac{L(y,\xi)}{\alpha_y} } \right] - \frac{t}{\alpha_y} \frac{ \mathbb E \left[ e^{t \frac{L(y,\xi)}{\alpha_y} } L(y,\xi) \right]}{ \mathbb E \left[ e^{t \frac{L(y,\xi)}{\alpha_y} } \right]},
%		\end{align*}
%		
%	and
	\begin{align}
	\frac{\partial^2  }{\partial \alpha_y ^2} \alpha_y \Lambda\left(y,\frac{t}{\alpha_y}\right) &= 
	 \frac{t^2}{\alpha_y^3} \left\{  
	  \frac{ \mathbb E \left[ \exp\left( {t
                                                                                                      \frac{L(y,\xi)}{\alpha_y}
                                                                                                      }\right){
                                                                                                      L(y,\xi)}^2
                                                                                                      \right]
                                                                                                      }{
                                                                                                      \mathbb
                                                                                                      E
                                                                                                      \left[
                                                                                                      \exp\left({t
                                                                                                      \frac{L(y,\xi)}{\alpha_y}
                                                                                                      }\right)
                                                                                                      \right]
                                                                                                      }
                                                                                                      -
                                                                                                      \frac{
                                                                                                      \mathbb
                                                                                                      E
                                                                                                      \left[
                                                                                                      \exp\left({t
                                                                                                      \frac{L(y,\xi)}{\alpha_y}
                                                                                                      }
                                                                                                      \right)
                                                                                                      L(y,\xi)
                                                                                                      \right]^2
                                                                                                      }{
                                                                                                      \mathbb
                                                                                                      E
                                                                                                      \left[
                                                                                                      \exp\left({t
                                                                                                      \frac{L(y,\xi)}{\alpha_y}
                                                                                                      }
                                                                                                      \right)
                                                                                                      \right]^2 } \right\} >  0.\label{eq:var}
	% - \frac{t}{\alpha_y^2} \frac{ \mathbb E \left[ e^{t \frac{L(y,\xi)}{\alpha_y} } L(y,\xi) \right]}{ \mathbb E \left[ e^{t \frac{L(y,\xi)}{\alpha_y} } \right]} + \frac{t}{\alpha_y^2} \frac{ \mathbb E \left[ e^{t \frac{L(y,\xi)}{\alpha_y} } L(y,\xi) \right]}{ \mathbb E \left[ e^{t \frac{L(y,\xi)}{\alpha_y} } \right]}  
	%- \frac{t}{\alpha_y} \left\{  \frac{t}{\alpha_y^2} \frac{ \mathbb E \left[ e^{t \frac{L(y,\xi)}{\alpha_y} } L(y,\xi) \right]^2 }{ \mathbb E \left[ e^{t \frac{L(y,\xi)}{\alpha_y} } \right]^2 }  - \frac{t}{\alpha_y^2} \frac{ \mathbb E \left[ e^{t \frac{L(y,\xi)}{\alpha_y} }{ L(y,\xi)}^2 \right] }{ \mathbb E \left[ e^{t \frac{L(y,\xi)}{\alpha_y} } \right] }\right\} ,
	\end{align}
	Observe that the expression on the right hand side
        of~\eqref{eq:var} is the variance of $L(y,\xi)$, with respect to the
        `twisted' distribution,
        \[
        \frac{\exp\left( {t \frac{L(y,\xi)}{\alpha_y} }
            \right) dF(\xi)}{\mathbb E\left[ \exp\left({t \frac{L(y,\xi)}{\alpha_y}}\right) \right]},
        \]
        and the overall expression is strictly positive since
        $t^2/\a_y^3 > 0$. It follows that
        $\a_y\L\left(y,\frac{t}{\a_y}\right)$ is strictly convex in $\a_y$. Similarly, observe that $\alpha_x \Lambda\left(x,-\frac{t}{\alpha_x}\right)$ is strictly convex in $\alpha_x$. Consequently, it is straightforward to see that the Hessian
         of $J(\alpha_x,\alpha_y,t) \quad \forall t \in \mathbb{R}$
         is positive definite. Therefore, $J(\alpha_x,\alpha_y,t)$ is
         strictly convex $\forall~\alpha_x,\alpha_y~\in[0,1]\times
         [0,1]$. We also know that the cumulant generating function is
         convex in  $t$ in general, but it is strictly convex due to
         Assumption~\ref{assume:2}. Since the sum of two strictly convex function is strictly convex, $J(\alpha_x,\alpha_y,t)$ is strictly convex $\forall t \in \mathbb{R}$.
	
Next, observe that for any $\eta>0$, there exists a $t \in
\mathbb{R}$, such that for a given  $\alpha_x,\alpha_y \in[0,1]\times
[0,1]$, \[J(\alpha_x,\alpha_y,t ) \leq  -  I(\alpha_x,\alpha_y)  +
\eta.\] We now follow the arguments in Sec. 3.2.5 of ~\cite{BoVa2004}. For any $\beta \in [0,1]$ and $\a_x^j,\a_y^j \in
[0,1]\times [0,1]$, $j=1,2$,
	\begin{align*}
			- I(\beta \alpha^1_x +(1-\beta) \alpha^2_x,\beta \alpha^1_y + (1-\beta)\alpha^1_y ) &= \inf_{t \in \mathbb R } J(\beta \alpha^1_x +(1-\beta) \alpha^2_x,\beta \alpha^1_y + (1-\beta)\alpha^1_y,t ) 
			\\ 
			& \leq J(\beta \alpha^1_x +(1-\beta) \alpha^2_x,\beta \alpha^1_y + (1-\beta)\alpha^1_y,\beta t_1 +(1-\beta) t_2 ) 
			\\
			& < \beta J(\alpha^1_x , \alpha^1_y , t_1  ) + (1-\beta) J(\alpha^2_x , \alpha^2_y , t_2  ) 
			\\
			& \leq - \beta I(\alpha^1_x,\alpha^1_y)  - (1-\beta) I(\alpha^2_x,\alpha^2_y) + \eta,
		\end{align*}	
	where penultimate inequality follows from Jensen's
        inequality. Since $\eta$ is arbitrary, it follows that
        $I(\a_x,\a_y)$ is strictly concave.
	\end{proof}

\begin{lemma} \label{lem:concrate2}
	$\min_{x \in Q(\d)}  \sum _{y \in D} I(\alpha_x,\alpha_y) \mathbb{I}_{\{ f(y) < f(x) \}}$ is strictly concave in $\{\alpha_1,\alpha_2,\ldots \alpha_d\} \in[0,1]^d$.
\end{lemma}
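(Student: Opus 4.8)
The plan is to reduce the claim to Lemma~\ref{lem:concrate1} by exploiting the structure of the function as a \emph{minimum of sums of strictly concave functions}. First, I would fix any pair of design points $x,y \in D$ and observe that $I(\alpha_x,\alpha_y)$, as established in Lemma~\ref{lem:concrate1}, is strictly concave jointly in $(\alpha_x,\alpha_y) \in [0,1]\times[0,1]$. The key structural observation is that each such term depends on the full allocation vector $\boldsymbol{\alpha} = (\alpha_1,\ldots,\alpha_d)$ only through the two coordinates $\alpha_x$ and $\alpha_y$. Since a function that is strictly concave in a subset of its coordinates and constant in the remaining ones is concave (though not strictly so) as a function on the full space $[0,1]^d$, I would first record that each summand $I(\alpha_x,\alpha_y)\mathbb{I}_{\{f(y) < f(x)\}}$ is concave in $\boldsymbol{\alpha}$.

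Next, for each fixed $x \in Q(\d)$, I would argue that the inner sum $\sum_{y \in D} I(\alpha_x,\alpha_y)\mathbb{I}_{\{f(y) < f(x)\}}$ is strictly concave in $\boldsymbol{\alpha}$. The reasoning is that the sum ranges over $y \in D$ with $f(y) < f(x)$; as $y$ varies it involves every coordinate $\alpha_y$ with $y$ better than $x$, and through the repeated appearance of $\alpha_x$ together with the collection of $\alpha_y$'s, the sum is strictly convex (equivalently $-I$ is) in each coordinate it touches. A clean way to see strict concavity is to note that the sum of a strictly concave function and any concave function is strictly concave; since at least one summand is strictly concave in $\alpha_x$ (the coordinate $\alpha_x$ appears in \emph{every} term of the sum via the $-\alpha_x\Lambda(x,-t/\alpha_x)$ piece), and the remaining coordinates $\alpha_y$ each appear strictly concavely in their respective terms, the total is strictly concave along every direction in the relevant coordinate subspace.

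The main obstacle is the outer minimization over $x \in Q(\d)$: in general, the minimum (pointwise infimum) of strictly concave functions need \emph{not} be strictly concave — indeed the minimum of concave functions is concave, and strictness can be lost at points where the minimizing index switches. The hard part will therefore be to justify that strict concavity survives the $\min$. I would address this by a direct argument along an arbitrary segment: fix $\boldsymbol{\alpha}^1 \neq \boldsymbol{\alpha}^2$ in $[0,1]^d$ and $\beta \in (0,1)$, and let $g_x(\boldsymbol{\alpha})$ denote the inner sum for index $x$. Writing $x^\beta$ for an index attaining the minimum at the convex combination $\beta\boldsymbol{\alpha}^1 + (1-\beta)\boldsymbol{\alpha}^2$, I would bound
\begin{align*}
\min_{x \in Q(\d)} g_x\!\left(\beta\boldsymbol{\alpha}^1 + (1-\beta)\boldsymbol{\alpha}^2\right) &= g_{x^\beta}\!\left(\beta\boldsymbol{\alpha}^1 + (1-\beta)\boldsymbol{\alpha}^2\right) \\
&> \beta\, g_{x^\beta}(\boldsymbol{\alpha}^1) + (1-\beta)\, g_{x^\beta}(\boldsymbol{\alpha}^2) \\
&\geq \beta \min_{x \in Q(\d)} g_x(\boldsymbol{\alpha}^1) + (1-\beta)\min_{x \in Q(\d)} g_x(\boldsymbol{\alpha}^2),
\end{align*}
where the strict middle inequality uses the strict concavity of the single function $g_{x^\beta}$ established in the previous step, and the final inequality simply replaces each term by the corresponding minimum. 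This shows the composite function is strictly concave, completing the proof. I should double-check that the minimizing index $x^\beta$ indeed yields a \emph{strict} inequality, which it does precisely because $g_{x^\beta}$ is itself strictly concave, so the only gap to watch is that $\boldsymbol{\alpha}^1$ and $\boldsymbol{\alpha}^2$ genuinely differ in a coordinate that $g_{x^\beta}$ depends on strictly — a point I would confirm using that $\alpha_x$ appears in every summand of $g_x$.
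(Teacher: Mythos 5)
Your route is the same as the paper's: combine Lemma~\ref{lem:concrate1} with the fact that a pointwise minimum of strictly concave functions is again strictly concave. The paper invokes that fact in one line without proof; your displayed three-line chain is precisely its standard proof, so on that step you supply what the paper omits. (Note, though, that your preceding worry is misstated: for functions that genuinely are strictly concave on the whole domain, strictness is \emph{never} lost under pointwise minima --- your own display proves exactly this; it is the pointwise \emph{maximum} that fails to preserve concavity.)

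The genuine gap lies one step earlier, at your claim that each inner sum $g_x(\boldsymbol{\alpha}) := \sum_{y \in D} I(\alpha_x,\alpha_y)\mathbb{I}_{\{f(y)<f(x)\}}$ is strictly concave in the \emph{full} vector $\boldsymbol{\alpha}\in[0,1]^d$. It is not: $g_x$ involves only the coordinates $\alpha_x$ and $\alpha_y$ for those $y$ with $f(y)<f(x)$, and is therefore \emph{constant} in every coordinate $\alpha_z$ with $z\neq x$ and $f(z)\geq f(x)$. Your closing sentence correctly identifies this as the point to verify, but the proposed fix --- that $\alpha_x$ appears in every summand --- is beside the point: the danger is coordinates appearing in \emph{no} summand of $g_{x^\beta}$. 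If $\boldsymbol{\alpha}^1$ and $\boldsymbol{\alpha}^2$ differ only in such a coordinate and $x^\beta$ remains a minimizer at both endpoints, every inequality in your chain becomes an equality and strict concavity fails. This actually happens: take the Gaussian case~\eqref{eq:gauss-loss} with $D=\{x_1,x_2,x_3\}$, $f(x_1)=f(x^*)=0$, $f(x_2)=1$, $f(x_3)=10$, unit variances, and $\epsilon=1/2$, so that $Q(\delta)=\{x_2,x_3\}$. On the region where all $\alpha_i\geq 0.1$ one checks $g_{x_2}=\tfrac{1}{2}\left(1/\alpha_1+1/\alpha_2\right)^{-1}\leq \tfrac{1}{4} < \tfrac{5}{2}\leq g_{x_3}$, so the minimum equals $g_{x_2}$, which does not depend on $\alpha_3$; the objective is then constant along segments that vary only $\alpha_3$, hence not strictly concave on $[0,1]^d$. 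To be fair, the paper's own one-line proof has the identical defect --- the functions being minimized are not strictly concave on all of $[0,1]^d$ --- so the statement as written should really be weakened to concavity, with strictness holding only along directions that perturb the coordinates actually entering the active term.
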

\begin{proof}
		From Lemma~\ref{lem:concrate1} and the fact that the minimum of strictly concave functions preserves strict concavity, the proposition follows. 
	\end{proof}

\begin{theorem}
	The following constraint maximization problem is strictly concave,
	\begin{align}
	\label{eq:ola}
	\max_{\{\alpha_1,\alpha_2,\ldots \alpha_d\}}  \quad & \min_{x
		\in
		Q(\d)}
	\sum
	_{y \in
		D}
	I(\alpha_x,\alpha_y)
	\mathbb{I}_{\{
		f(y) <
		f(x)
		\}}, \\
	\nonumber
	\quad \text{such that}  \quad  \quad &\sum_{i=1}^d \alpha_i = 1, ~ \alpha_i \in [0,1] \forall i \in\{1,2,\ldots,d\}.	
	\end{align}
	\end{theorem}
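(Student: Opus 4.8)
The plan is to recognize that this theorem is essentially a corollary of Lemma~\ref{lem:concrate2} together with elementary convexity of the feasible region, so the work lies mostly in assembling the pieces and securing well-definedness at the boundary. First, I would invoke Lemma~\ref{lem:concrate2}, which already establishes that the objective $\min_{x \in Q(\d)} \sum_{y \in D} I(\alpha_x,\alpha_y)\,\mathbb{I}_{\{f(y)<f(x)\}}$ is strictly concave in $(\alpha_1,\ldots,\alpha_d)$ on $[0,1]^d$. Second, I would observe that the feasible set $\Delta := \{\alpha \in \mathbb{R}^d : \sum_{i=1}^d \alpha_i = 1,\ \alpha_i \in [0,1]\}$ is the standard probability simplex: it is the intersection of the affine hyperplane $\{\sum_i \alpha_i = 1\}$ with the cube $[0,1]^d$, and hence is a nonempty, compact, convex subset of $\mathbb{R}^d$.

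Third, since we are maximizing a strictly concave, continuous function over a nonempty compact convex set, the program is concave and I would extract the two standard consequences. Existence of a maximizer follows from the Weierstrass extreme value theorem, the objective being continuous on $\Delta$ as a finite minimum of continuous functions. Uniqueness follows from strict concavity: were $\alpha^{(1)} \neq \alpha^{(2)}$ both optimal, then $\tfrac{1}{2}(\alpha^{(1)}+\alpha^{(2)}) \in \Delta$ would attain a strictly larger objective value, contradicting optimality. Together these deliver the claimed strict concavity of the program and, with it, a unique optimal linear allocation rule.

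The main obstacle, and indeed the only genuine subtlety, is the behavior of $I(\alpha_x,\alpha_y)$ on the boundary of $\Delta$ where some coordinate $\alpha_x$ or $\alpha_y$ vanishes: the terms $\alpha_y \Lambda(y,\tfrac{t}{\alpha_y})$ and $\alpha_x \Lambda(x,-\tfrac{t}{\alpha_x})$ divide by the allocation weights, so as $\alpha_i \downarrow 0$ the argument of the CGF diverges. I would therefore need to verify, or define by continuous extension, that the objective remains finite and continuous up to the boundary, leaning on the steepness and finiteness afforded by Assumptions~\ref{assume:3} and~\ref{assume:4} to control the rate at which $\alpha_i \Lambda$ grows. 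Once continuity on all of $\Delta$ is secured, the Weierstrass argument applies globally and the conclusion follows without further difficulty.
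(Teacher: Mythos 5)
Your proposal is correct and takes essentially the same route as the paper, whose entire proof is a one-line appeal to Lemma~\ref{lem:concrate2} together with the (implicit) convexity of the simplex constraint set. The extra material you add --- Weierstrass existence, uniqueness via strict concavity, and the concern about the behavior of $I(\alpha_x,\alpha_y)$ as coordinates of the allocation vanish --- goes beyond the paper's argument (which leaves that boundary issue unaddressed as well) but does not change the approach.
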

\begin{proof}
	The proof immediately follows from Lemma~\ref{lem:concrate2}. 
	\end{proof}

Hence, the optimal allocation strategy is the solution of \eqref{eq:ola}. Next, we illustrate this optimization for specific cases.
\subsection*{Example 1: Normal Distribution}

Assume that $L(x,\xi) \sim \mathcal{N} (f(x),  \sigma^2(x))$, and observe that
\begin{align} 
  \nonumber
	I(\a_x,\a_y) &= \sup_{t \in R} \left( - \alpha_y \Lambda\left(y,\frac{t}{\alpha_y}\right)  - \alpha_x \Lambda\left(x,-\frac{t}{\alpha_x}\right)  \right) 
	\\
%  \nonumber
%	&= \sup_{t \in R} \left( - t f(y) -  \frac{t^2 \sigma^2(y)}{2 \alpha_y}  + t f(x) -  \frac{t^2 \sigma^2(x)}{2 \alpha_x}  \right) 
%	\\
  \label{eq:gauss-loss}
	&= \frac{1}{2} (f(x)-f(y))^2 \left(\frac{\sigma^2(y)}{\alpha_y}+ \frac{\sigma^2(x)}{\alpha_x}\right)^{-1}.
	\end{align}
\\
Therefore, using Theorem~\ref{thrm:1}, the {\sc LDP} for a given $\epsilon > 0$ is
\begin{align}
\nonumber
\lim_{n\to \infty} \frac{1}{n} \log \mathbb{P}(  f(\hat{x}_n)- f(x^*) > \epsilon )
&= -\min_{x \in Q(\d)}  \huge\sum _{y \in D} \frac{1}{2} (f(x)-f(y))^2 \left(\frac{\sigma^2(y)}{\alpha_y}+ \frac{\sigma^2(x)}{\alpha_x}\right)^{-1}\mathbb{I}_ {\{f(x) > f(y)\}},
\end{align}
where  $Q(\d)$ is as defined in Theorem~\ref{thrm:1}.

\subsection*{Example 2: Binomial Distribution}
Assume that $L(x,\xi) \sim Bin (\frac{f(x)}{m}, m)$, where $m$ is the number of binomial trials. Observe that
\begin{align} 
\nonumber
I(\alpha_x,\alpha_y) &= \sup_{t \in R} \left( - \alpha_y \Lambda(y,\frac{t}{\alpha_y})  - \alpha_x \Lambda(x,-\frac{t}{\alpha_x})  \right) 
\\
%\nonumber
%&= m \sup_{t \in R} \left( - \alpha_y \log \left(1-\frac{f(y)}{m} + \frac{f(y)}{m} e^{\frac{t}{\alpha_y}}\right)  - \alpha_x \log \left(1-\frac{f(x)}{m} + \frac{f(x)}{m} e^{-\frac{t}{\alpha_x}}\right)  \right) 
%\\
\label{eq:bin-loss}
&= m \left( - \alpha_y \log \left(1-\frac{f(y)}{m} + \frac{f(y)}{m} e^{\frac{t^*}{\alpha_y}}\right)  - \alpha_x \log \left(1-\frac{f(x)}{m} + \frac{f(x)}{m} e^{-\frac{t^*}{\alpha_x}}\right)  \right),
\end{align}
\\
where $t^* = \log \left(  \frac{f(x)(m-f(y))}{f(y)(m-f(x))}  \right) \left[\frac{1}{\alpha_x} + \frac{1}{\alpha_y}\right]^{-1} $.Therefore using Theorem~\ref{thrm:1}, the {\sc LDP} for a given $\epsilon > 0$ is
\begin{align*}
\lim_{n\to \infty} \frac{1}{n} \log \mathbb{P}( \omega \in \Omega : f(\hat{x}_n)- f(x^*) > \epsilon )
&= - \min_{x \in Q(\delta)}  \sum _{y \in D} m \Bigg( - \alpha_y \log \left(1-\frac{f(y)}{m} + \frac{f(y)}{m} e^{\frac{t^*}{\alpha_y}}\right) \\
& - \alpha_x \log \left(1-\frac{f(x)}{m} + \frac{f(x)}{m} e^{-\frac{t^*}{\alpha_x}}\right)  \Bigg) \mathbb{I}_ {\{f(x) > f(y)\}},
\end{align*}
where  $Q(\delta)$ is as defined in Theorem~\ref{thrm:1}.

\section{Sequential Optimization}
The optimization problem~\eqref{eq:ola} is solved by the master
machine, and the sampling budget is assigned to the slave
machines. Observe that~\eqref{eq:ola} assumes that the master machine has complete knowledge of the true
cumulant generating function of the (stochastic) loss function. In
practice, of course, this is unknown and the master machine must rely
on empirical estimates of the objective from the slave machines. An
appropriate approach to solving the optimization problem would be to
perform a sequential optimization as sample estimates accumulate. With
a large, but finite budget the sequential optimum should be close to
the optimizer of~\eqref{eq:ola}.

We demonstrate the computation in two different scenarios. First, we
assume that the (rate) function $J(\a_x,\a_y,t)$ can be analytically
optimized over $t$. In this case, the objective in~\eqref{eq:ola} is
simpler to estimate and optimize. Note
that these instances are direct analogues in the {\sc SAA} context of the ranking and selection
(R\&S)
problems studied in~\cite{GlJu2004,HuPa2013}, and ~\cite{GlJu2018}. To deal with
these types of problems, we present Algorithm~\ref{algo:1} below, that parallels~\cite{HuPa2013} Algorithm 2.

Second, in many applications of {\sc SAA}, the optimization of $J(\cdot,\cdot,t)$
must be carried out numerically since closed forms are not
available. These instances are far more complicated than the
straightforward R\&S analogues considered above, as the geometry of
the loss function now plays a prominent role. % While Algorithm~\ref{algo:1} can be used for this optimization
% problem, the computational complexity of this algorithm is ??, owing
% to the fact that $\hat J(\cdot,\cdot,t)$ must now be optimized at each
% iteration.
Algorithm~\ref{algo:2} below exploits an {\it
	expectation-maximization} (EM) type iterative structure to
recursively compute the optimal allocation efficiently.

Of course, in either scenario, the rate functions and objectives must
be estimated by Monte Carlo sampling at the slave machines. Let $\hat J(\a_x,\a_y,t) := \a_y \hat \L\left(y,\frac{t}{\a_y}\right)
+ \a_x \hat\L\left(x,-\frac{t}{\a_x}\right)$, where
$\hat\L\left(x,\frac{t}{\a_x}\right) := (\lfloor n\a_x \rfloor)^{-1}
\sum_{i=1}^{\lfloor n\a_x \rfloor} \exp\left(\frac{t}{\a_x} L(x,\xi^x_i)\right)$ is
the natural empirical estimator of the log moment generating
function. We also define the set $\hat Q(\d) = \{x \in D : \hat f(x) >\hat
f(x^*) + \e\}$, for a given $\e > 0$. In the remainder of this
section, we assume a fixed $\e > 0$. Let $\hat \a_x(n)$
represent the estimated allocation at location $x \in D$ with total
sampling budget $n$. We define the {\it optimality gap} of the estimator as
\[
OG(n) := \sum_{x \in D} |\hat \a_x^{(n)} - \a_x|,
\]
where $\a_x$ is the `true' optimal linear allocation obtained by solving~\eqref{eq:ola}.

\subsection{Optimization with Closed-forms}
Consider situations where the `inner' optimization 
$\sup_{t\in \bbR} J(\a_x,\a_y,t)~x,y~\in~D$ can be completed in closed form analytically.
For instance, in Examples 1 and 2 above closed forms were derived for
cases where the loss functions at each of the design points are
Gaussian and binomially distributed (respectively). Let $\hat
I(\cdot,\cdot)$ represent the Monte Carlo estimate of this closed
form, which will require estimation of the mean and (possibly) the variance. Algorithm 1 proceeds iteratively by estimating the optimal allocation
while accumulating more and more samples at each iteration. This method parallels
Algorithm 2 in ~\cite{HuPa2013}. 

\begin{table}[h]
	\caption{Algorithm 1.}
	\centering
	\begin{tabular}{ | c | l |}
		\hline
		Step 0 & Initialize pilot sample $n_x^{(0)} = N^0$ at each $~x \in
		D$.\\
		For each $k \geq 0:$ & \\
		
		Step 1 &
		\begin{tabular}[t]{{@{}l@{}}}
			Generate $n_x^{(k)}$ i.i.d. samples at each $x \in D$.\\
			Compute $\hat I(\a_x,\a_y)~\forall~x,y~\in D$, $\hat
			f(x)~\forall x~\in D$ and $\hat Q(\d)$ using all
			$\sum_{i=0}^kn_x^{(i)}$ samples.\\
		\end{tabular}
		\\
		Step 2 &
		\begin{tabular}[t]{{@{}l@{}}}
			Compute $(\a_x^{(k+1)},~x \in D) = \underset{\{\a_x, x\in D\}} {\arg\max}~\underset{\{x
				\in \hat Q(\d)\}}{ \min} \sum_{y \in D} \hat I(\a_x,\a_y)
			\mathbb I_{\{\hat f(x) > \hat f(y)\}}$.\\
			Generate $T_j ~ \forall j=\{1,2,\ldots N\}$, where $T_j$ has empirical distribution with probability \\ $\alpha_x^{(k+1)}$ on support
			$x \in D$. Set $n_x^{(k+1)} = \sum_{j=1}^N \mathbb{I}_{\{T_j = x\}} ~\forall x \in D$.
		\end{tabular}
		\\
		Step 3 & Repeat Steps 1 and 2 , until  sampling budget exhausts.\\
		% if $n_x^{(k+1)} > 0,~x \in D$\\
		\hline
	\end{tabular}
	\label{algo:1}
\end{table}

Observe that the algorithm runs until the sampling budget is exhausted
with no guarantees on convergence to the true optimal allocation
rule. Consistency results from ~\cite{HuPa2013} and ~\cite{GlJu2004} imply that with
a large, but finite budget the allocation obtained at the end of the
procedure should closely match the optimal linear allocation rule.We
now illustrate the algorithm by running through a couple of examples.
%To empirically study the consistency of the estimator of $\hat J(\epsilon)$, we define optimality gap as the difference between the estimated and the true optimal rate.

\paragraph{Gaussian Loss:}
For simplicity we assume that the variances $\s^2(x)~ \forall x\in D$ are known, and the mean
value at each of the design points is estimated using the natural
estimator. 

The choice of $\d$ determines the error tolerance, and affects the
allocation budget. Figure~\ref{fig:1}(a) depicts a case where $\d$ is much smaller than the resolution of
the grid. In this case, both the true allocation and the estimated
allocation place much of the sampling effort near the optimizer. On
the other hand, when $\d$ is of the order of the grid resolution,
Figure~\ref{fig:1}(b) demonstrates that both the true and the
estimated allocations expend substantial sampling efforts near the
``boundary'' of the sets $Q(\d)$ and $\hat Q(\d)$ respectively. Figure~\ref{fig:1a} illustrates that in both the cases, optimality gap appears to converge, but with large variance.

\begin{figure}[!htb]
	\centering
	\begin{minipage}{.9\textwidth}
		\begin{subfigure}[h]{0.4\textwidth}
			\includegraphics[width=\textwidth,height=\textwidth,keepaspectratio]{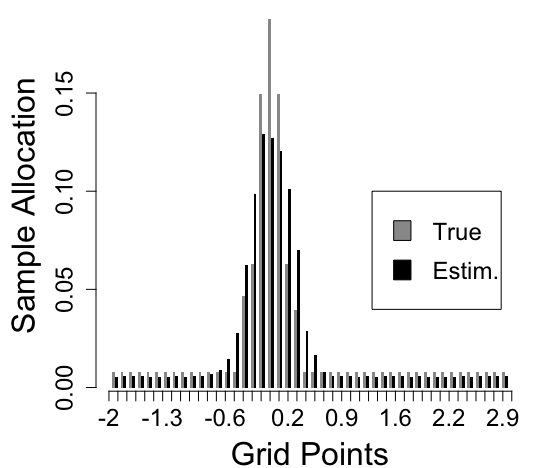}
			\caption{$\delta = 0.05$}
		\end{subfigure}
		~~~~~~~~~~~~~~~~
		\begin{subfigure}[h]{0.4\textwidth}
			\includegraphics[width=\textwidth,height=\textwidth,keepaspectratio]{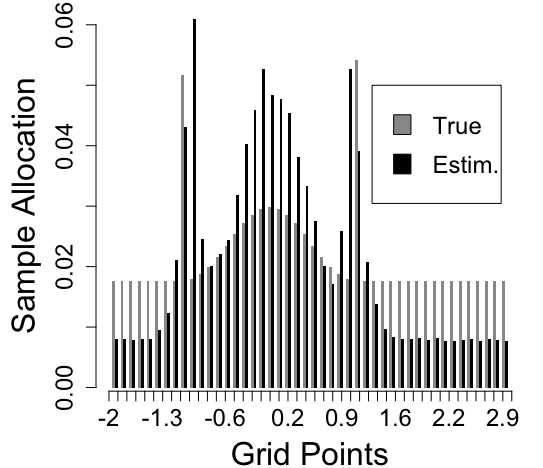}
			\caption{$\delta = 1.0$}
		\end{subfigure}
		\caption{Approximate and true allocations with Gaussian distributed
			loss functions. Total sampling budget $n=4600$, $|D| = 46$ and
			averaged over 50 sample paths.}
		\label{fig:1}
	\end{minipage}
	~
	\begin{minipage}{.9\textwidth}
		\begin{subfigure}[h]{0.4\textwidth}
			\includegraphics[width=\textwidth,height=\textwidth,keepaspectratio]{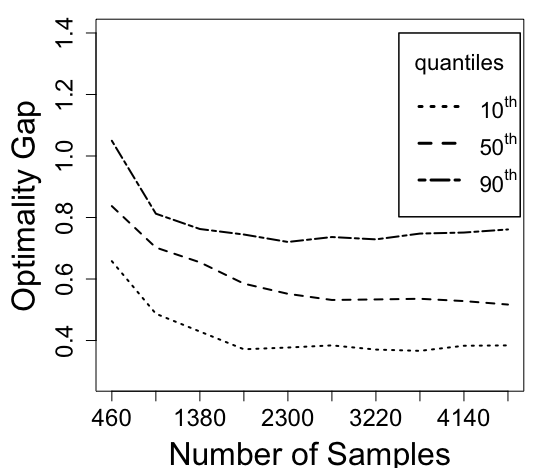}
			\caption{$\delta = 0.05$}
		\end{subfigure}
		~~~~~~~~~~~~~~~~
		\begin{subfigure}[h]{0.4\textwidth}
			\includegraphics[width=\textwidth,height=\textwidth,keepaspectratio]{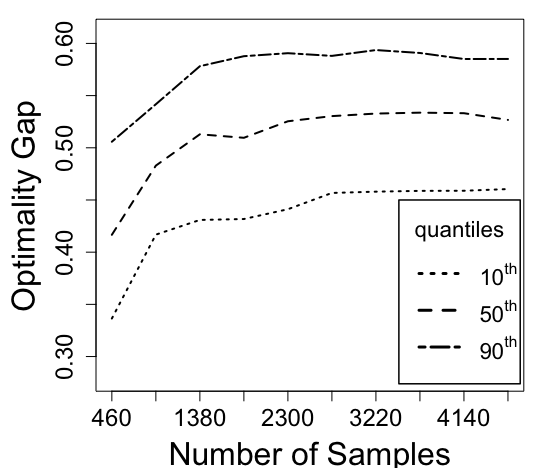}
			\caption{$\delta = 1.0$}
		\end{subfigure}
		\caption{Sample quantile of Optimality Gap for Gaussian distributed loss functions for 50 Sample paths. Total sampling budget $n=4600$ and $|D| = 46$.\\}
		\label{fig:1a}
	\end{minipage}
\end{figure}

%, however when $\d$ is of the order of the grid resolution the estimated rate deviates from the true rate. It can also be noted that the variance of the optimality gap is much larger in the later case, with larger $\d$.
\paragraph{Binomial Loss:} Next, in the case of the binomially distributed loss function, it suffices to compute the objective value $\hat f(x)$ using the natural, plug-in estimator for~\eqref{eq:bin-loss}. Our observations here parallel the Gaussian case.

\subsection{Optimization without Closed-forms}
It is rare to place explicit distributional assumptions on the loss
function at different design points in the grid, and typical stochastic programming models assume regularity conditions
on the loss function $L(\cdot,\cdot)$ and stochasticity conditions on
$\xi$. The distributional conditions are then consequences of these
two ingredients. In general, then, it is highly unlikely that there
exists a closed form for the optimization $\sup_{t \in \bbR}
J(\cdot,\cdot, t)$, and the optimization must be carried out
numerically on a Monte Carlo estimate $\hat J(\cdot,\cdot,t)$.

% Algorithm~\ref{algo:1} can be straightforwardly adapted to the current
% situation by requiring that  $\hat I(\a_x,\a_y) = \sup_{t \in \bbR}
% \hat J(\a_x,\a_y,t)$ be carried out explicitly:

% \begin{table}[h]
%   \centering
%   \begin{tabular}{ | c | l |}
%                     \hline
%                     Step 0 & Initialize pilot sample $n_x^{(0)} = N^0$ at each $~x \in
%                              D$\\
%     For each $k \geq 0:$ & \\

%     Step 1 &
%              \begin{tabular}[t]{{@{}l@{}}}
%                Generate $n_x^{(k)}$ i.i.d. samples at each $x \in D$\\
%                Compute $\hat I(\a_x,\a_y) = \sup_{t \in \bbR} \hat J(\a_x,\a_y,t)~\forall~x,y\in D$, $\hat
%                f(x)~x\in D$ and $\hat Q(\d)$\\~~ using all
%                $\sum_{i=1}^kn_x^{(i)}$ samples\\
%              \end{tabular}
% \\
%     Step 2 &
%              \begin{tabular}[t]{{@{}l@{}}}
%                Compute $\a^{(k+1)} \in \arg\max_{\{\a_x, x\in D\}} \min_{\{x
%                \in \hat Q(\d)\}} \sum_{y \in D} \hat I(\a_x,\a_y)
%                \mathbb I_{\{\hat f(x) < \hat f(y)\}}$\\
%                Set $n_x^{(k+1)} = 0 \vee \left(n - \sum_{i=0}^k \sum_{x \in
%                D} n_x^{(i)}\right) \a_x^{(k+1)}
%                ~\forall x \in D$
%              \end{tabular}
% \\
%      Step 3 & Repeat Steps 1 and 2 if $n_x^{(k+1)} > 0,~x \in D$\\
%               \hline
%   \end{tabular}
%   \caption{Algorithm 1b}
%   \label{algo:1b}
% \end{table}

As noted before, Algorithm 1 has no guarantees on convergence within a fixed
number of iterations, since it is only running till the
sampling budget is exhausted. On the other hand, there are many applications of {\sc SAA} where it is
useful to run the algorithm till convergence. For instance, in
data-driven applications, it may be possible to obtain $n$ samples
repeatedly from a simulator, or by bootstrap sampling of a given
dataset. To handle such situations we propose a second iterative
algorithm that is expectation-maximization (EM) like, and proceeds in two iterative
steps. In step one, for a fixed linear allocation rule
$(\alpha_x,~x\in D)$, we compute $\hat J(\a_x,\a_y,t)$ for every
$x,y~\in~D$ and identify the
optimal $t(x,y)$. In step two, we
compute an allocation using the objective in~\ref{eq:ola}
albeit with $I(\a_x,\a_y, t(x,y))$ from step one. These two steps are
iterated till there is no improvement in the allocation in step
two. The following display summarizes the algorithm. % Recall that $n$
% is the total sampling budget.

\begin{table}[h]
	\caption{Algorithm 2.}
	\centering
	\begin{tabular}{ | c | l |}
		\hline
		Step 0 & Fix $(\a^{(0)}_x,~x\in D)$, where $\a^{(0)}_x
		\in [0,1]$ and $\sum_x \a^{(0)}_x = 1$. \\
		For each $k \geq 0$: & \\
		Step 1 & \begin{tabular}[t]{@{}l@{}} 
			Generate $T_j ~ \forall j=\{1,2,\ldots n\}$, where $T_j$ has empirical distribution  with probability \\ $\alpha_x^{(k)}$ on support
			$x \in D$. Set $n_x^{(k)} = \sum_{j=1}^n \mathbb{I}_{\{T_j = x\}} ~\forall x \in D$.\\
			Generate $n_x^{(k)}$ i.i.d. random samples at each $~x\in D$. \\ 
			Compute $t^{(k)}(x,y) = \arg\sup_{t\in\bbR} \hat J(\a_x^{(k)},\a_y^{(k)},t)$ for all $x,y \in D$, and\\
			Compute $\hat f(x)$ for all $x \in D$ and $\hat Q(\d)$ using all $\sum_{i=0}^k n_x^{(i)}$ samples. \end{tabular} \\
		Step 2 & Compute $(\a^{(k+1)}_x,~x\in D) \in \underset{\{\a_x,~x\in D \}}{\arg\max}~ \underset{{x
				\in \hat Q(\d)}}{\min} \sum_{y\in D} \hat I(\a_x,\a_y,t^{(k)}(x,y))
		\mathbb I_{\{\hat f(x) > \hat f(y)\}}$.\\
		Step 3 & Repeat Step 1 and 2, until $(\a_x^{(k)},~x\in D)$
		converges.\\
		\hline
	\end{tabular}
	\label{algo:2}
\end{table}

Note that we assume that the sampling budget $n$ is fixed, and allow the
algorithm to produce $n$ samples on each iteration. It is possible to
couple the iterative scheme to the sampling budget by fixing the
number of iterations {\it a priori} to $K=\lfloor n*\gamma \rfloor$,
where $\g \in [ 0,1]$ is fixed. At the $k$th iteration, $\sum_{i=0}^k\a_x^{(i)} K$
samples is generated at sample point $x \in D$. In this case, the
algorithm terminates once $K$ iterations have been completed.

\paragraph{Mean-squared Error} Consider a squared error loss, widely used in empirical
risk minimization of machine learning models, where $L(x,\xi) := (x -
\xi)^2$. For simplicity, we assume that the design space is one
dimensional and that $\xi$ is Gaussian.

%\begin{figure}[!h]
%  \centering
%  \begin{subfigure}[h]{0.45\textwidth}
%    \includegraphics[width=\textwidth,height=\textwidth,keepaspectratio]{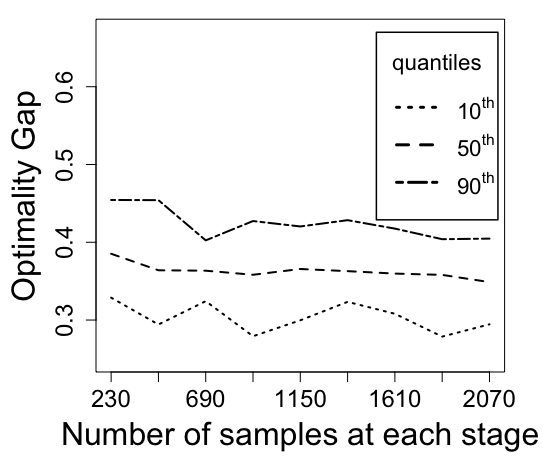}
%    \caption{$\delta = 1.0$}
%  \end{subfigure}
%~
%  \begin{subfigure}[h]{0.45\textwidth}
%    \includegraphics[width=\textwidth,height=\textwidth,keepaspectratio]{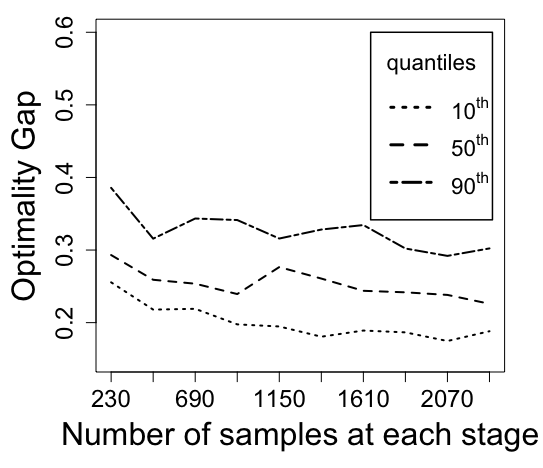}
%    \caption{$\delta = 2.0$}
%  \end{subfigure}
%
%  \caption{Sample quantile of Optimality Gap for squared
%		loss functions with Gaussian random samples for 30 Sample paths. $|D| = 46$.}
%  \label{fig:3a}
%\end{figure}

Observe that the estimated allocation is quite close to the true
allocation, even in this case (see Figure~\ref{fig:3}). Significant budget allocations are once
again made in the vicinity of the boundary of
$Q(\d)$. Figure~\ref{fig:3a} on the other hand, amply demonstrates
that while the estimators are consistent with a large budget ``on
average'' as demonstrated by the 50th percentile lines, there is
significant variance in the optimality gap even at large sample
values as shown by the spread between the 90th and 10th percentiles.

\begin{figure}[!htb]
	\centering
	\begin{minipage}{.9\textwidth}
		\begin{subfigure}[h]{0.4\textwidth}
			\includegraphics[width=\textwidth,height=\textwidth,keepaspectratio]{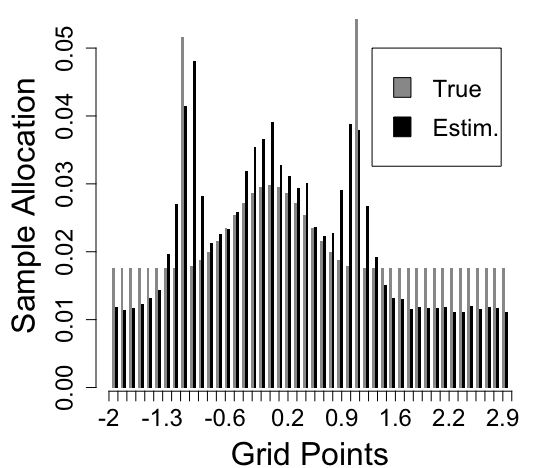}
			\caption{$\delta = 1.0$}
		\end{subfigure}
		~~~~~~~~~~~~~~~~
		\begin{subfigure}[h]{0.4\textwidth}
			\includegraphics[width=\textwidth,height=\textwidth,keepaspectratio]{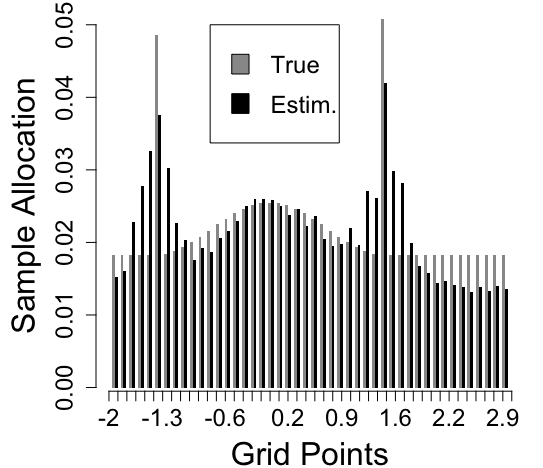}
			\caption{$\delta = 2.0$}
		\end{subfigure}
		\caption{Approximate and true allocations for squared loss with Gaussian random samples.  Sample at each iteration $n=460$, $|D| = 46$ and averaged over 30 sample paths.}
		\label{fig:3}
	\end{minipage}
	~
	\begin{minipage}{.9\textwidth}
		\begin{subfigure}[h]{0.4\textwidth}
			\includegraphics[width=\textwidth,height=\textwidth,keepaspectratio]{figures/OG_Sql1.png}
			\caption{$\delta = 1.0$}
		\end{subfigure}
		~~~~~~~~~~~~~~~~
		\begin{subfigure}[h]{0.4\textwidth}
			\includegraphics[width=\textwidth,height=\textwidth,keepaspectratio]{figures/OG_Sql2.png}
			\caption{$\delta = 2.0$}
		\end{subfigure}
		\caption{Sample quantile of Optimality Gap for squared
			loss functions with Gaussian random samples for 30 Sample paths and $|D| = 46$.}
		\label{fig:3a}
	\end{minipage}

\end{figure}

\section{Conclusions and Future Directions}
We study the problem of optimally allocating a sampling budget in
order to compute a sample average approximation ({\sc SAA}) of the solution of a single-stage
stochastic program. Under a fixed finite discretization of the design
space (or `grid'), we first establish a large deviations principle satisfied by
the regret, defined as $f(\hat x_n) - f(x^*)$, where $f(\cdot)$ is the
true objective, $\hat x_n$ is the {\sc SAA} estimate of the optimizer and
$x^*$ the true optimizer. Next, we identify a constraint maximization
problem, whose solution identifies an optimal linear allocation rule
that maximizes the decay rate of the likelihood of identifying an incorrect
optimal design point, in the limit of a large sampling
budget. Finally, we designed two different algorithms to sequentially
implement this optimization.

The developments in this paper lead to multiple important and open
problems, relevant to both simulation optimization and machine
learning more broadly. First, our current treatment of the regret
effectively assumes that the grid is fixed. An important question is
how the grid size affects the large deviations rate function. In
particular, it can be easily seen that when the grid has the
cardinality of the continuum, the rate function does not exist. On the
other hand, there is definite benefit in scaling the grid size with
the sampling budget. How should this be done to obtain a large
deviations principle in the limit?

Second, while results in ~\cite{GlJu2004} and ~\cite{HuPa2013} can be
straightforwardly adapted to establish consistency of the sequentially
estimated allocation rule, the efficiency of the estimator is
unknown. In particular, we conjecture that the rate function estimators used are highly
inefficient. This follows from the fact that we use the canonical estimator for the cumulant
generating function, and it is conjectured that the latter estimators are
heavy-tailed ~\cite{GlJu2011}. On the other hand, we are really only interested in the
accuracy of the estimated objective in the vicinity of the true
optimizer, and not the global accuracy. Closer to the optimizer, and
in the large sampling budget limit, we conjecture that it is possible
to use fewer moments to accurately estimate the rate function, leading
to substantial improvements in efficiency.

Third, note that the algorithms designed here are not
dimension free, and we conjecture that even with strictly convex
objective functions $f$ they will not scale well. We
postulate that it is possible to combine~\eqref{eq:ola} with multiple
stochastic gradient descent (SGD) crawlers starting at each of the
grid-points, and letting these iterate a fixed number of times, to
make the allocation optimization algorithm dimension-free.

Fourth, `gridding' the design space has significant algorithmic
advantages since it allows our present algorithms to be implemented in
a `divide and conquer' manner, whereby slave machines sample and estimate
moments of the loss function at each design point, and communicate
this to the master machine for optimization. On the other hand,
estimates at a given design point must be close to those at other
design points in its
neighborhood. With a fine grid, the amount of communication overhead
required to implement a fully parallel computation is likely
significant. We speculate that it should be possible to design a
completely decentralized optimization scheme to determine the optimal
allocation just by performing local message passing. Such a scheme
would yield substantial reductions in communication overhead.

%\printbibliography
\bibliographystyle{plain}
\bibliography{refs}

\end{document}